\theoremstyle{definition}
\newtheorem{lemma}{Lemma}
\newcommand{\avg}[1]{\left \langle #1 \right\rangle}
\newcommand{\Tr}{\mathrm{Tr}}
\newcommand{\abs}[1]{\left | #1 \right|}
\renewcommand{\epsilon}{\varepsilon}
\renewcommand{\O}[1]{\mathcal{O}\left(#1\right)}
\newcommand{\norm}[1]{\left\|#1\right\|}
\newcounter{para}
\newcommand*\bigcdot{\mathpalette\bigcdot@{.5}}
\newcommand*\bigcdot@[2]{\mathbin{\vcenter{\hbox{\scalebox{#2}{$\m@th#1\bullet$}}}}}
\newcolumntype{L}{>{$}l<{$}} 
\newcolumntype{C}{>{$}c<{$}} 
\newcolumntype{R}{>{$}r<{$}} 
\newcommand*{\addFileDependency}[1]{
  \typeout{(#1)}
  \@addtofilelist{#1}
  \IfFileExists{#1}{}{\typeout{No file #1.}}
}
\newmdenv[topline=false,rightline=false,bottomline=false,linewidth=2pt,linecolor=white!60!black,]{leftborder}
\newcommand{\supp}[1]{\text{supp}(#1)}
\newcommand{\subalign}[1]{%
  \vcenter{%
    \Let@ \restore@math@cr \default@tag
    \baselineskip\fontdimen10 \scriptfont\tw@
    \advance\baselineskip\fontdimen12 \scriptfont\tw@
    \lineskip\thr@@\fontdimen8 \scriptfont\thr@@
    \lineskiplimit\lineskip
    \ialign{\hfil$\m@th\scriptstyle##$&$\m@th\scriptstyle{}##$\hfil\crcr
      #1\crcr
    }%
  }%
}
\newcommand{\NC}{\mathcal{N}}
\newcommand{\NU}{\tilde{\mathcal{U}}}
\newcommand{\UC}{\mathcal{U}}
\newcommand{\PEC}{\text{PEC}}
\newcommand{\LoPEC}{\text{LoPEC}}
\newcommand{\Var}{\text{Var}}
\newcommand{\Ham}{\text{Ham}}
\newcommand{\p}{p}
\begin{document}
\title{Locality and Error Mitigation of Quantum Circuits}
\date{\today}  
\author{Minh C. Tran}
\affiliation{IBM~Quantum,~IBM~T.J.~Watson~Research~Center,~Yorktown~Heights,~NY~10598,~USA}
\author{Kunal Sharma}
\affiliation{IBM~Quantum,~IBM~T.J.~Watson~Research~Center,~Yorktown~Heights,~NY~10598,~USA}%
\author{Kristan Temme}
\affiliation{IBM~Quantum,~IBM~T.J.~Watson~Research~Center,~Yorktown~Heights,~NY~10598,~USA}%

\begin{abstract}
In this work, we study and improve two leading error mitigation techniques, namely Probabilistic Error Cancellation (PEC) and Zero-Noise Extrapolation (ZNE), for estimating the expectation value of local observables. For PEC, we introduce a new estimator that takes into account the light cone of the unitary circuit with respect to a target local observable.
Given a fixed error tolerance, the sampling overhead for the new estimator can be several orders of magnitude smaller than the standard PEC estimators. 
For ZNE, we also use light-cone arguments to establish an error bound that closely captures the behavior of the bias that remains after extrapolation.
\end{abstract}
\maketitle

\section{Introduction} Noisy quantum computers are scaling beyond the point their classical counterparts can efficiently simulate. A central question is whether, without error correction, noisy quantum devices can provide practical advantages over classical methods \cite{m:bravyiFutureQuantumComputing2022}. Most near-term algorithms involve the estimation of the expectation value of an observable after a shallow-depth circuit~\cite{m:corcolesChallengesOpportunitiesNearTerm2020a} and, to achieve quantum advantages, this estimation should be more accurate than those produced by classical algorithms~\cite{m:temmeErrorMitigationShortdepth2017a}.

Besides algorithmic errors, experimental inaccuracies degrade the desired output expectation values. The primary goal of error mitigation is to mitigate this effect and produce a better estimation of the expectation value, at the expense of increased sampling overhead and classical post-processing.
Based on this idea, numerous mitigation techniques, such as Probabilistic Error Cancellation (PEC) \cite{m:temmeErrorMitigationShortdepth2017a,m:liEfficientVariationalQuantum2017}, Zero-Noise Extrapolation (ZNE) \cite{m:temmeErrorMitigationShortdepth2017a}, and application-specific mitigation strategies employing symmetries and post-selection techniques have been introduced and demonstrated in recent years \cite{m:mccleanTheoryVariationalHybrid2016,
m:mccleanHybridQuantumclassicalHierarchy2017,
m:obrienErrorMitigationVerified2021,
m:piveteauErrorMitigationUniversal2021a,
m:kandalaErrorMitigationExtends2019,
m:endoPracticalQuantumError2018,
m:bonet-monroigLowcostErrorMitigation2018,
m:sagastizabalExperimentalErrorMitigation2019,
m:mcardleErrorMitigatedDigitalQuantum2019,
m:suErrorMitigationNearterm2021,
m:czarnikErrorMitigationClifford2021,
m:hugginsVirtualDistillationQuantum2021b,
m:koczorExponentialErrorSuppression2021,
m:piveteauErrorMitigationUniversal2021a,
m:cerezoVariationalQuantumState2022,
m:cincioMachineLearningNoiseResilient2021,
m:funckeMeasurementErrorMitigation2022,
m:mccleanHybridQuantumclassicalHierarchy2017,
m:wangCanErrorMitigation2021,
mm:bergSingleshotErrorMitigation2022,
m:takagiFundamentalLimitsQuantum2022,
m:takagiUniversalSamplingLower2022}. 
These techniques exploit only the structure of the noisy circuits and are applicable regardless of the observables we would like to measure.

In this work, we analyze the performance of two leading error mitigation strategies, namely PEC and ZNE, and show that they can be drastically improved when specialized to local observables. 
For PEC, we introduce an efficient estimator for the expectation value of an observable that takes into account its ``light cone''~\cite{m:tranHierarchyLinearLight2020a,m:childsTheoryTrotterError2021}, reducing the sampling overhead of PEC by several orders of magnitude. Additionally, our estimator does not require tailoring the experimental procedure to the observables and, thus, can be retroactively applied to improve estimates from past PEC experiments.

While PEC estimators are unbiased and sampling overhead is the primary bottleneck of PEC, the main challenge in ZNE is to constrain the bias that remains after applying the Richardson extrapolation~\cite{m:temmeErrorMitigationShortdepth2017a}.
Here, we prove a bound on the bias in ZNE using light-cone arguments, showing ZNE performs significantly better when applied to local observables.
In particular, our bound increases with the number of gates inside the light cones and closely capture the correct behavior of the bias.

\medskip 

\section{Setup} We consider an ideal circuit consisting of $d$ unitary channels $\mathcal U = \mathcal U_d \cdots \mathcal U_1$. Each unitary channel may be a single quantum gate or a layer consisting of multiple gates. Let $\tilde{\mathcal U} = \tilde{\mathcal U}_d\cdots \tilde{\mathcal U}_1$ be the corresponding noisy circuit, where each noisy channel $\tilde{\mathcal U}_i$ is the composition of the ideal channel $\mathcal U_i$ with a noise channel $\mathcal N_i$:
\begin{align} 
	\tilde {\mathcal{U}}_i =  \mathcal U_i \mathcal N_i. 
\end{align}
The noise channels $\mathcal N_i$ can be assumed to be Pauli channels when the idealized gates $\mathcal U_i$ are Clifford gates. Generalized noise channels can always be transformed into Pauli channels via Pauli twirling ~\cite{m:bennettPurificationNoisyEntanglement1996,m:bennettPurificationNoisyEntanglement1996,m:knillFaultTolerantPostselectedQuantum2004,m:kernQuantumErrorCorrection2005,m:gellerEfficientErrorModels2013,m:wallmanNoiseTailoringScalable2016,m:bergProbabilisticErrorCancellation2022a}. A key prerequisite in many state-of-the-art error mitigation techniques is the ability to learn the noise channels $\mathcal N_i$. Fully characterizing a Pauli channel on $n$ qubits generally requires resources scaling exponentially with $n$, but it can often be made efficient by considering the structure of $\mathcal N_i$ on a particular device. To learn $\mathcal N_i$, we make the assumption that it can be written in Pauli-Lindblad form: $\mathcal N_i = e^{\mathcal L_i}$, where the generator
\begin{align} 
	\mathcal L_i(\rho)  = \sum_{j = 1}^J \lambda_{i,j} (P_{i,j} \rho P_{i,j}^\dag - \rho)
	= \sum_{j = 1}^J \mathcal L_{i,j}(\rho)\label{eq:generator}
\end{align}
for some $J\in \mathbb N$, $\mathcal L_{i,j}(\rho)\equiv \lambda_{i,j} (P_{i,j} \rho P_{i,j}^\dag - \rho)$, and $P_{i,j}$ are Pauli strings on $n$ qubits. The task of learning a general Pauli-Lindblad channel $\mathcal N_i$ reduces to learning possibly $\O{4^n}
$ parameters $\lambda_{i,j}$.
However, if $\mathcal U_i$ is a constant-depth layer, it is reasonable to assume that only low-weight Pauli strings contribute to the generator $\mathcal L_i$. 
It is then sufficient to learn the corresponding polynomially many parameters.
Throughout this paper, we write the error channels in this Pauli-Lindblad form and assume they can be efficiently learned.
Other than that, we do not make any locality assumption about the noise model.
So, for example, qubits that are not connected by $\UC_i$ can still have correlated Pauli noise.

\begin{figure*}[t]
\centering
\includegraphics[width=0.7\textwidth]{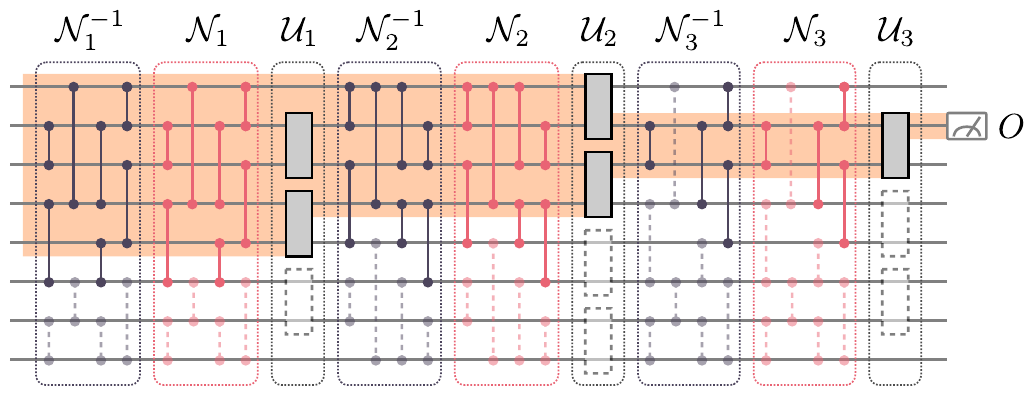}
\caption{An illustration of the light cone of the observable $O$ initially supported on a single site. 
The ideal circuit $\UC_3\UC_2\UC_1$ consists of three layers, each consists of gates (rectangular boxes) that may or may not act nontrivially on $O$ (filled and empty boxes, respectively).
The orange shaded area indicates the qubits inside the light cone $\mu_i$ defined in \cref{eq:light-cone-def}.
The noisy version of the circuit consists of additional noise channels $\NC_1,\NC_2,$ and $\NC_3$, which can be further decomposed into a series of Pauli channels in \cref{eq:expansion-Lambda} (vertical lines). 
PEC cancels errors in the circuit by probabilistically applying the inverse $\NC_i^{-1}$ at each layer.
We reduce the sampling overhead of PEC by removing the contributions from error channels that lie outside the light cone (dashed vertical lines) in the construction of the local estimator [\cref{eq:local-PEC-estimator}].
}
\label{fig:lightcone-demo}
\end{figure*}

We note that, since two Pauli strings either commute or anti-commute, the generators $\mathcal L_{i,j}$ in \cref{eq:generator} mutually commute.
The noise channel therefore factorizes into $J$ channels: $\mathcal N_i = \prod_{j = 1}^J \mathcal N_{i,j}$, where
\begin{align} 
	\NC_{i,j}(\rho) = e^{\mathcal L_{i,j}}(\rho) = (1-p_{i,j})\rho + \p_{i,j} P_{i,j} \rho P_{i,j}^\dag, \label{eq:expansion-Lambda}
\end{align}
and $\p_{i,j} = (1-e^{-2\lambda_{i,j}})/2\in [0,1/2)$ is the probability of the error $P_{i,j}$.
Eq.~\eqref{eq:expansion-Lambda} follows from the Taylor expansion of $\exp(\mathcal L_k)$ and the fact that $\mathcal L_k^2 = -2\mathcal L_k$.  
In terms of $\NC_{i,j}$, the full noisy circuit is given by 
\begin{align} 
	\tilde{\UC} =\prod_{i = 1}^d \NU_i = \prod_{i = 1}^d \UC_i \prod_{k = 1}^{J}\NC_{i,j}, \label{eq:full-noisy}
\end{align}
where $\prod_{i} \mathcal A_i = \cdots \mathcal A_2 \mathcal A_1$ is an ordered product.

Upon applying $\tilde{\mathcal U}$ on an initial state $\rho$, we measure in the computational basis $z\in \{0,1\}^{n}$ to obtain $m$ samples $\{z_1,\dots,z_m\}$.
We use the samples to estimate the expectation value of an observable $O = \sum_z o_z \ketbra{z}$:
	$\Tr[\NU(\rho)O] = \Tr[\rho\ \NU^\dag(O)] \approx \frac{1}{m} \sum_{j=1}^mo_{z_j},$
where $\NU^\dag$ is the adjoint map of $\NU$.
We assume that $\norm{O} = 1$, where $\norm{\cdot}$ is the spectral norm, throughout this paper. 
Our main results apply to local observables---those supported on only a few sites, such as two-point correlators---and linear combinations of local observables.

\medskip

\section{PEC for local observables}The presence of noise deviates the expectation value of $O$ from its ideal value.
PEC mitigates this effect by additionally implementing the inverse of each noise channel:
\begin{align} 
	{\NU}_{\PEC} &\equiv \prod_{i = 1}^d \NU_i \left(\prod_{j = 1}^{J}\NC^{-1}_{i,j}\right)\nonumber\\
	&=\prod_{i = 1}^d \UC_i \left(\prod_{k' = 1}^{J}\NC_{i,k'}\right) \left(\prod_{j = 1}^{J}\NC^{-1}_{i,j}\right) = \prod_{i = 1}^d \UC_i.  \label{eq:PEC-circuit}
\end{align}
Compared to the bare noisy circuit in \cref{eq:full-noisy}, the addition of $\NC^{-1}_{i,j}$ exactly negates $\NC_{i,j}$ for all $i,j$, returning the ideal circuit.
One can exactly derive the inverse maps $\NC^{-1}_{i,j}$ upon learning $\mathcal N_{i,j}$:
\begin{align} 
	 \NC^{-1}_{i,j}(\rho)  = \gamma_{i,j}\left[(1-\p_{i,j}) \rho - \p_{i,j} P_{i,j}^{\dag} \rho P_{i,j}\right],
\end{align}
where $\gamma_{i,j} = e^{2\lambda_{i,j}} \geq 1$.
To probabilistically implement $\NC^{-1}_{i,j}(\rho)$, in each run of the circuit, we additionally apply $P_{i,j}$ before each $\NU_i$ with probability $\p_{i,j}$ and appropriately rescale the measurement results to account for the normalization factor $\gamma_{i,j}$. 
Explicitly, let $\sigma_{i,j} = \pm 1 $ indicate whether $P_{i,j}$ was applied $(-1)$ or not $(1)$ and  $\bm \sigma \in \{\pm 1\}^{dJ}$ be a vector containing all $\sigma_{i,j}$.
Ref.~\cite{m:temmeErrorMitigationShortdepth2017a} shows that 
\begin{align} 
	\hat o_z^{\PEC}(\bm \sigma) = o_z(\bm \sigma) \prod_{i,j} \gamma_{i,j} \sigma_{i,j}  \label{eq:o-PEC-estimator}
\end{align}
is an unbiased estimator for the ideal expectation value:
\begin{align} 
	\mathbb E_{\bm \sigma} \mathbb E_z [\hat o_z^{\PEC}(\bm \sigma)] =  \avg{\UC^\dag(O)} \label{eq:PEC-convergent}
\end{align}
where $\avg{\cdot}$ is the expectation value with respect to $\rho$.
For readability, we may drop the implicit dependence on $\bm \sigma$ and simply write $o_z$ and $\hat o_z^{\PEC}$ in the rest of the paper.
\cref{eq:o-PEC-estimator} indicates that the variance of $\hat o_z^{\PEC}$,
\begin{align} 
	\Var[\hat o_z^{\PEC}] = \O{\prod_{i,j} \gamma_{i,j}^2} 
	= \O{e^{4\sum_{i,j}\lambda_{i,j}}} ,\label{eq:var-oz-PEC}
\end{align}
is exponential in the total error rate $\lambda \equiv \sum_{i,j}\lambda_{i,j}$.
For a generic depth-$d$ circuit on $n$ qubits, $\lambda = \O{nd}$, implying that the sampling overhead---the number of shots $m$ it takes for $\hat o_z^{\PEC}$ to converge to the true value---grows exponentially both with the system size and with the depth of the circuit.

Given a local observable $O$, our first result is a construction of an efficient estimator $o^{\LoPEC}_z$ whose variance depends only on the ``light cone'' of $\mathcal U$ with respect to $O$ and can be much smaller than that of $\hat o_z^{\PEC}$.
Our construction is based on an observation that the maps $\NC_{i,j}$ and their inverse should not affect the expectation of $O$ if they are supported entirely outside the light cone.
Therefore, $\p_{i,j}$ and the corresponding $\lambda_{i,j}$ should not contribute to the uncertainty in estimating $\avg{O}$.

For a concrete analysis, we first define the light cone of an observable $O$.
We consider the Heisenberg picture, where an adjoint map $\NU_\PEC^\dag$ is applied on $O$:
\begin{align} 
	 {\NU}_{\PEC}^\dag =  \prod_{i = d}^1 \left(\prod_{j = 1}^{J}\NC^{-1}_{i,j}\right) \NU_i^\dag. 
\end{align}
We note that both $\NC_{i,j}$ and their inverse are self-adjoint and mutually commute.
In this picture, $\NU_d^\dag$ is the first channel to be applied on $O$.
Let us assume that $O$ is a local Pauli operator. 
Let  $\supp{O}$ denote the support of $O$ and
\begin{align} 
	\mu_i \supseteq \supp{\UC_i^\dag \UC_{i+1}^\dag \cdots \UC_{d}^\dag(O)} \label{eq:light-cone-def}
\end{align}
be a set of qubits that contains the support of $O$ after evolving under the last ideal $d-i+1$ unitary channels.
We require that $\mu_i$ are efficiently classically computable for all $i$. 
There may be multiple choices of $\mu_i$, but for the reasons we discuss below, it is ideal to pick the smallest sets that satisfy these requirements. 
The collection of $\mu_i$ for $i = 1,\dots,d$ forms what we call the light cone of $O$ under $\UC$. 
We say that a map $\mathcal N_{i,j}$ is outside the light cone if its support is distinct from $\mu_i$, i.e. $\supp{P_{i,j}}\cap \mu_i = \varnothing$, and is inside otherwise. 
We note that Pauli channels $\NC_{i,j}$ do not increase the support of $O$, allowing the light cone $\mu_i$ to be characterized entirely using the ideal circuit.

Let $\mu$ be the set of all $(i,j)$ such that $\mathcal N_{i,j}$ are inside the light cone.
We claim that 
\begin{align} 
	\hat o_z^{\LoPEC}(\bm \sigma) \equiv o_z (\bm \sigma) \prod_{(i,j)\in \mu} \gamma_{i,j} \sigma_{i,j}   \label{eq:local-PEC-estimator}
\end{align}
is also an unbiased estimator for the ideal expectation value:
\begin{align} 
	\mathbb E_{\bm \sigma}\mathbb E_{z}[\hat o_z^{\LoPEC}] = \avg{\UC^\dag(O)}. \label{eq:LoPEC-convergence}
\end{align}
We emphasize that $o_z$ is the same outcomes after applying the standard PEC circuit in \cref{eq:PEC-circuit}.
But, the local PEC estimator $\hat o_z^{\LoPEC}$ has a simple, intuitive feature: its variance
\begin{align} 
	\Var[\hat o_z^{\LoPEC}] = \O {e^{4\sum_{(i,j)\in \mu}\lambda_{i,j}}} \label{eq:LoPEC-convergent}
\end{align}
only involves $\lambda_{i,j}$ that correspond to noise channels inside the light cone and may be much smaller than that of $\hat o_z^{\PEC}$, as defined in \cref{eq:var-oz-PEC}.

We provide detail proof for our claims in \cref{sec:lopec} and illustrate its idea using a simple example.
Consider an ideal circuit consisting of a single layer $\UC$ ($d = 1$) and a single Pauli noise channel $\NC $ ($J = 1$):
\begin{align} 
	\NC(\rho) = (1-p)\rho + p P \rho P^\dag, 
\end{align}
where $P$ is a Pauli string supported entirely outside the support of $\UC^\dag( O)$.
The implementation of PEC would involve applying either $\NU = \UC \NC$ with probability $1-p$ or $\NU \mathcal P$, where $\mathcal P(\rho) = P\rho P^\dag $, with probability $p$.
On average, the standard PEC estimator in \cref{eq:o-PEC-estimator} returns the ideal expectation value
\begin{align} 
	&\gamma\left[(1-p)\avg{ \NU^\dag(O)} - p \avg{\mathcal P\NU^\dag(O)}\right] \nonumber\\
	&=  \avg{\NC^{-1}\NU^\dag(O)} =  \avg{\UC^\dag(O)},
\end{align}
with a variance $\O{\gamma^2}$.
In contrast, because $\NC$ is outside the light cone, the local estimator in \cref{eq:local-PEC-estimator} gives
\begin{align} 
	(1-p)\avg{ \NU^\dag(O)} + p\avg{ \mathcal P\NU^\dag(O)} = \avg {\NC\NU^\dag(O)} 
\end{align}
on average.
Coupling with the fact that $\NC$ acts trivially on $\UC^\dag(O)$, we have
$	\avg {\NC\NU^\dag(O)} =\avg{\NC^2\UC^\dag(O)} = \avg {\UC^\dag(O)}, $
which is also the ideal expectation value.
However, the variance of this local estimator is $\O{1}$ instead of $\O{\gamma^2}$.
Generalizing this argument to $d,J\geq 1$ proves our claim in \cref{eq:LoPEC-convergent}.

To demonstrate the performance of the local PEC estimator, we numerically simulate the expectation value of a local observable after a noisy depth-$d$ circuit on $n$ qubits.
The $n$ qubits are arranged on a two-dimensional heavy hex lattice.
Each layer of the circuit contains approximately $n/2$ CNOTs between distinct pairs of nearest-neighboring qubits.
At each layer, we generate $J = 10n$ random noise channels such that the error rate per CNOT is approximately $4\times 10^{-3}$.
See \cref{sec:numerics-specifications} for the exact specifications of the random noisy~circuit.
 
In \cref{fig:pec-numerics}(a), we plot a histogram of the mean values of the standard PEC estimator $\hat o_z^{\PEC}$ (gray) and the local PEC estimator $\hat o_z^{\LoPEC}$ (orange) after $m = 10^6$ samples. We obtain the histogram by generating $10^4$ independent sets of $m$ samples from the same noisy circuit.
Here, $n = 65, d = 35$, and $O = Z_1$ is the Pauli Z on the first qubit. 
The inset plots the sites inside the light cone from $O$ (orange dots) at different layers. 
We choose the circuit to be deep enough so that the light cone may cover the entire system.
The figure demonstrates that both $\hat o_z^{\PEC}$ and $\hat o_z^{\LoPEC}$ are unbiased estimators of the ideal value, but the latter has a smaller variance, thus requiring much fewer samples to converge.

In \cref{fig:pec-numerics}(b), we fix the depth of the circuits ($d = 10$) and plot the sampling overhead for estimating several observables---namely $Z_1, Z_1Z_{10},$ and $Z_1Z_4Z_{10}$---given circuits of different $n$.
Here, the sampling overhead is the variance of the estimator divided by $\epsilon^2$, where $\epsilon = 0.01$ is the desired precision.
While the sampling overhead using the standard PEC estimator grows with the system size, the sampling overhead of the local PEC estimator saturates at large $n$.

\section{ZNE for local observables}The main idea of ZNE is to obtain expectation values at several different error rates and, from these values, extrapolate to the zero noise limit. 
In contrast to PEC, ZNE is an biased estimator, and the error cannot be completely suppressed by taking more samples. 
In addition, the ZNE error cannot be inferred from the measurement statistics and one must rely on theoretical error bounds to guarantee the outcome is close to the ground truth. 
Here, we establish an error bound for ZNE that takes into account the light cone of the target observable.

In ZNE, we obtain expectation values at different error rates by amplifying the noise channels.
Let $\mathcal N_{i,j}^{(g)}$ denote the amplified version of $\mathcal N_{i,j}$, constructed from $\mathcal N_{i,j}$ by replacing $\p_{i,j}$ with $g\p_{i,j}$ for some $g\geq1$.
Note that instead of amplifying $\lambda_{i,j}$ as in earlier works~\cite{m:temmeErrorMitigationShortdepth2017a,m:bergProbabilisticErrorCancellation2022a}, we amplify the error probability $p_{i,j}$ for simplicity.
To implement such amplification with a gain factor $g\geq 1$:
\begin{align} 
	{\NU}^{(g)} &\equiv \prod_{i = 1}^d \UC_i \prod_{j = 1}^{J}\NC^{(g)}_{i,j}
	=\prod_{i = 1}^d \NU_i \prod_{j = 1}^{J}\NC^{(\delta g_{i,j})}_{i,j},  \label{eq:PEA-circuit}
\end{align}
we probabilistically apply additional noise channels $\NC^{(\delta g_{i,j})}_{i,j}$ with $\delta g_{i,j} = (g-1)/(1-2\p_{i,j})$, before each layer of the noisy circuit.  
We may also perform the unitary conjugation of the noise channels through the entire circuit to obtain a convenient representation of the noisy circuit 
\begin{align}
  \NU_g = \left(\prod_{i = 1}^{d}\prod_{j = 1}^{J}\tilde{\NC}_{i,j}^{(g)}\right) \UC_d\cdot\cdot\cdot \UC_1,  \label{eq:noisy-Ug-conjugated}
\end{align}
where $\tilde{\NC}_{i,j}^{(g)}(\rho) = (1-g\p_{i,j})\rho + g\p_{i,j} \tilde P_{i,j} \rho \tilde P_{i,j}^\dag $ and $\tilde P_{i,j} = \mathcal U_d\dots\mathcal U_{i+1}\mathcal U_i(P_{i,j})$ are Pauli strings conjugated by appropriate ideal circuit layers.

\begin{figure}[t]
\includegraphics[width=0.45\textwidth]{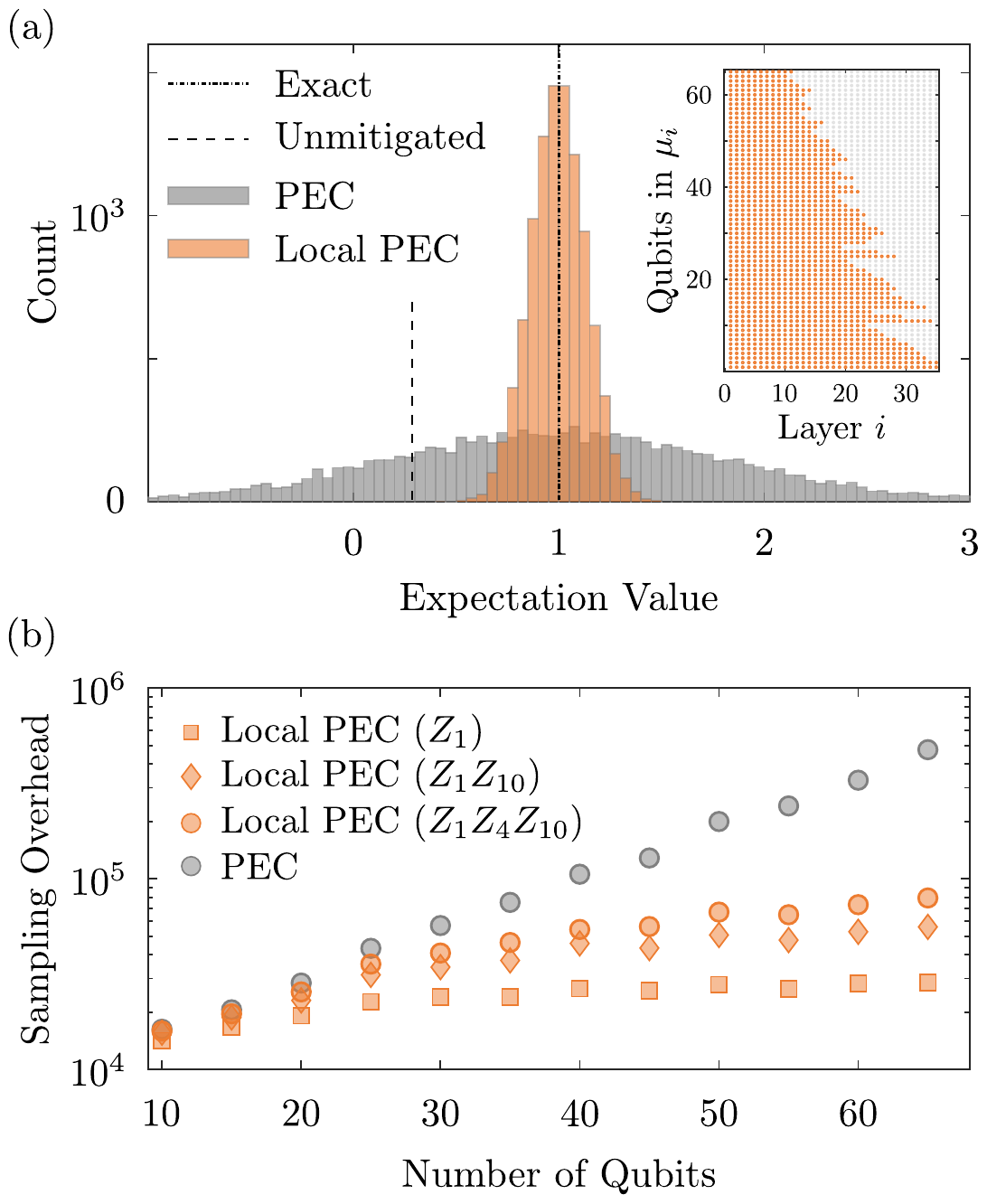}
\caption{Comparison between the standard PEC estimator and our local PEC estimator. (a) A histogram of expectation values obtained from $10^4$ independent sets, each of $10^6$ samples, using the standard PEC estimator (gray bars) and the local PEC estimator (orange bars).  
The inset indicates which qubits are in the light cone $\mu_i$ after layer $i$ of the circuit.
(b) The sampling overhead to achieve a fixed precision $\epsilon = 0.01$ using the standard estimator (gray) and the local estimators (orange).
The local estimators depend on the light cones of the observables, namely $Z_1$ (squares), $Z_1Z_{10}$ (diamonds), and $Z_1Z_4Z_{10}$ (circles). 
}
\label{fig:pec-numerics}
\end{figure}

With a gain factor $g$, we obtain the corresponding expectation value~$f(g) = \Tr(\NU_g^\dag(O)\rho)$, which can be represented as a polynomial function of $g$
\begin{align}
f(g) = \sum_{k=0}^K a_k g^k + R_{K+1}(g)~,
\end{align}
where $a_k$ are coefficients independent of $g$, $K\geq 1$ is a constant,  and $R_{K+1}$ is a remainder that sums over terms of higher orders in $g$.
Note that $f(1)$ is the expectation value obtained from the bare noisy circuit and $f(0) = a_0$ is the ideal value.
From the values at $g_0 = 1<g_1<\dots<g_K$, one constructs a linear combination~\cite{m:temmeErrorMitigationShortdepth2017a}
\begin{align}
f_{\text{ZNE}} = \sum_{\ell=0}^K \beta_\ell f(g_\ell) = f(0) + \underbrace{ \sum_{\ell = 0}^K \beta_\ell R_{K+1}(g_\ell)}_{\equiv \bar R_{K+1}}\label{eq:linear-com} 
\end{align}
by demanding that $\sum_{\ell=0}^K \beta_\ell = 1$ and $\sum_{\ell=0}^K\beta_\ell g_\ell ^k = 0$ for $k=1,\dots,K$. Solving these constraints returns the coefficients $\beta_\ell$~\cite{m:temmeErrorMitigationShortdepth2017a}:
    $\beta_\ell = \Pi_{m\neq \ell} [g_m/(g_\ell-g_m)]$.
The linear combination in \cref{eq:linear-com} gives the desired value of $f(0)$ up to a remainder that depends on high-order terms in $g$.
Since these terms correspond to multiple-error events, they should be small as long as Pauli error probabilities $\p_{i,j}$ are sufficiently small.

To bound the bias in ZNE, we use \cref{eq:noisy-Ug-conjugated} to expand $\Tr(\rho \NU_{g_\ell}^\dag(O))$ in a polynomial of $g_\ell$ and obtain an explicit form of the remainder:
\begin{align}
	R_{K+1}(g_\ell) = \sum_{k=K+1}^{dJ} g_\ell^k  
	\sum_{\nu_k} \p_{\nu_k} G_{\nu_k},\label{eq:R_K+1}
\end{align}
where $\nu_k = \{(i_1,j_1),\dots,(i_k,j_k)\}$ is a set of $k$ distinct pairs $(i,j)$, the second sum is over all possible sets $\nu_k$, $\p_{\nu_k}\equiv \p_{i_1,j_1}\dots \p_{i_k,j_k}$, 
\begin{align}
G_k \equiv \sum_{\sigma\in \{0,1\}^{k}}(-1)^{k-\Ham(\sigma)}\Tr(\UC(\rho) \tilde P_{\nu_k}^{\sigma\dag} O \tilde P_{\nu_k}^\sigma),\label{eq:Gk-def}
\end{align}
the sum in \cref{eq:Gk-def} is over all $k$-bit string $\sigma$, $\Ham(\sigma)$ is the Hamming weight of $\sigma$, and $\tilde P_{\nu_k}^\sigma \equiv \tilde P_{i_1,j_1}^{\sigma_1}\dots \tilde P_{i_k,j_k}^{\sigma_k}$ is a product of $\Ham(\sigma)\leq k$ Kraus operators.

\begin{figure}[t]
\includegraphics[width=0.42\textwidth]{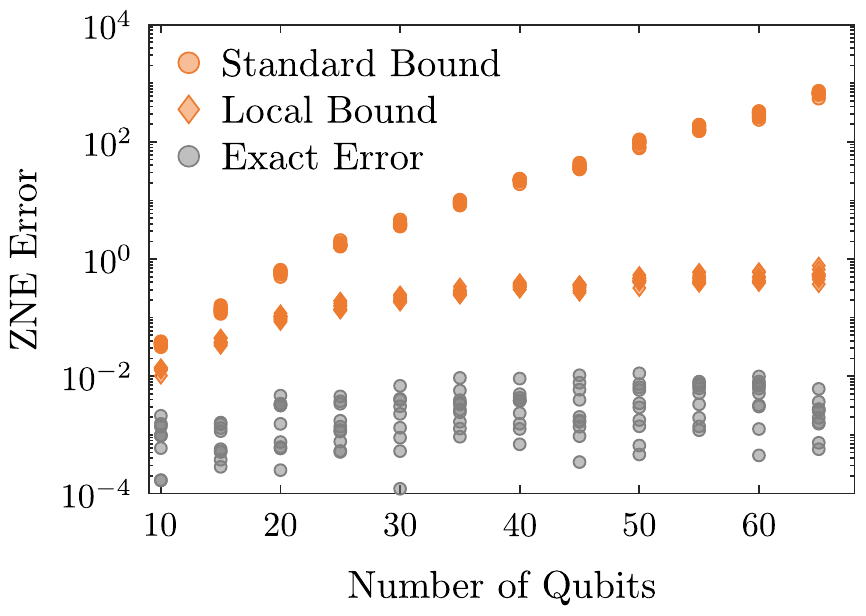}
\caption{A comparison between the exact ZNE error (gray circles), the standard bound [\cref{eq:upper-bound-1}, orange circles], and the local bound [\cref{eq:local-ZNE-bound}, orange diamonds] for measuring $Z_1$ in different randomly generated circuits.
The circuits have a fixed depth of 10 and different system sizes.
We applied ZNE with $K = 2$, $g_0 = 1,g_1 = 2,$ and $g_2 = 4$.
We estimated the exact error using $m = 10^6$ samples, making the shot noise negligible compared to the remainder.
}
\label{fig:zne}
\end{figure}

Since $\abs{G_k}\leq 2^k$, we arrive at an upper bound for the remainder in \cref{eq:linear-com}:
\begin{align} 
	\abs{\bar R_{K+1}}
	\leq \sum_{k = K+1}^{dJ} 2^k \abs{\sum_{\ell = 0}^K \beta_\ell g_\ell^k} \sum_{\nu_k} \p_{\nu_k}
	. \label{eq:upper-bound-1}
\end{align}
In practice, \cref{eq:upper-bound-1} provides a readily computable error bound for ZNE (see \cref{sec:zne-bound} for more details).
To understand how this error bound scales with the error rates, we let $\Gamma_k\equiv \abs{\sum_{\ell = 0}^K \beta_\ell g_\ell^k}$, $p \equiv \max_{i,j}\p_{i,j}$, $g = \max_{\ell} g_\ell$ and 
 obtain a more compact bound from \cref{eq:upper-bound-1}:
\begin{align} 
	 \abs{\bar R_{K+1}}
	\leq \sum_{k = K+1}^{dJ}  \Gamma_k \binom{dJ}{k} (2p)^k. \label{eq:upper-bound-2}
\end{align}
When the total experimental error rate is small, i.e. $dJgp\ll 1$, \cref{eq:upper-bound-2} implies the ZNE error is suppressed to the $K$th order: $\abs{\bar R_{K+1}} = \O{(2dJgp)^{K+1}}$.

Both \cref{eq:upper-bound-1,eq:upper-bound-2} depend on the total error rate on the entire circuit. 
But intuitively, errors that occur outside the light cone should not contribute to the remainder in \cref{eq:linear-com}.
In fact, we can realize this intuition in the error analysis by simply replacing $\NC_{i,j}$ with a trivial channel if it is outside the light cone.  
Let $\tilde \p_{i,j} = \p_{i,j}$ if $(i,j)\in \mu$, where $\mu$ is defined after \cref{eq:light-cone-def}, and $\tilde \p_{i,j} = 0$ otherwise.
\cref{eq:upper-bound-1} can be replaced by a bound that takes into account the light cone of the observable:
\begin{align} 
	\abs{\bar R_{K+1}}
	\leq \sum_{k = K+1}^{dJ} 2^k \abs{\sum_{\ell = 0}^K \beta_\ell g_\ell^k} \sum_{\nu_k} \tilde \p_{\nu_k}. \label{eq:local-ZNE-bound} 
\end{align}
We note that, in the above bounds, we ignore shot noise and assume $f(g_\ell)$ can be estimated exactly. 
In practice, we have a finite number of shots and use an estimator $\hat f(g)$ to estimate the expectation value at each gain factor $g$.
The statistical error in $f_{\text{ZNE}}$  is upper bounded by
\begin{align} 
 	 \sqrt{\sum_{\ell =0}^K \frac{1}{m_\ell}\beta_\ell^2 \Var[\hat f(g_\ell)]}
 	 \leq \sqrt{\sum_{\ell =0}^K \frac{\beta_\ell^2}{m_\ell}},
\end{align} 
where $m_\ell$ is the number of shots used to estimate $f(g_\ell)$.
Combining with \cref{eq:local-ZNE-bound}, the ZNE error is upper bounded by
\begin{align} 
	\sqrt{\sum_{\ell =0}^K \frac{\beta_\ell^2}{m_\ell}}
	+\sum_{k = K+1}^{dJ} 2^k \abs{\sum_{\ell = 0}^K \beta_\ell g_\ell^k} \sum_{\nu_k} \tilde \p_{\nu_k}. 
\end{align}

In \cref{fig:zne}, we compare the standard error bound in \cref{eq:upper-bound-1}, the local error bound in \cref{eq:local-ZNE-bound}, and the exact error after applying ZNE with $K = 2$ and $g_0 = 1, g_1 = 2, g_2 = 4$.
We randomly generate several different circuits as outlined earlier and compute the error for each of them.
We use $m = 10^6$ samples for each expectation value so that the shot noise is negligible compared to the ZNE remainder.
The circuits have a fixed CNOT-depth of 10 and a variable number of qubits $n$, from 10 to 65. 
Our local bound captures the behavior of the ZNE error better than the standard bound.
In particular, the local bound reaches a plateau as we increase $n$, whereas the standard bound grows with $n$ and becomes meaningless at large $n$.
However, the local bound still appears to be loose by about two orders of magnitude, owing primarily to the fact that we use the worst-case bound $\abs{G_k} \leq 2^k$.
In practice, the terms in \cref{eq:Gk-def} typically do not add up constructively, resulting in $G_k$ being much smaller than~$2^k$.

\medskip

\medskip 

\section{Conclusions}In this paper, we analyzed and improved error-mitigation techniques for measuring expectation values of local observables. 
Our results directly improve the performance of near-term algorithms, which heavily rely on substantial mitigation of errors in quantum devices. 
In particular, one can use our results and similar light-cone arguments to combine error mitigation with other techniques, including circuit knitting and classical shadow tomography.
\begin{acknowledgments}
We thank Andrew Eddins, Abhinav Kandala, Youngseok Kim, Antonio Mezzacapo, Alireza Seif, and Derek Wang for helpful discussions.
\end{acknowledgments}

\bibliography{zotero-generated,custom-bib}

\onecolumngrid

\pagebreak

\appendix

\vspace{0.5in}

\setcounter{theorem}{0}

\begin{center}
	{\large \textbf{Appendix for ``Locality and Error Mitigation of Quantum Circuits''} }
\end{center}

In this Appendix, we provide additional mathematical details on the local PEC estimator and on the ZNE error bound.
In \cref{sec:lopec}, we derive the mean of the local PEC estimator [\cref{eq:LoPEC-convergence}].
In \cref{sec:zne-bound}, we discuss the classical computation of the upper bound on the ZNE remainder.
In \cref{sec:numerics-specifications}, we provide detailed specifications for the numerics presented in the main text. 

\section{Mean and variance of the local PEC estimator} \label{sec:lopec}

In this section, we prove the claim in \cref{eq:LoPEC-convergence} that the mean of $\hat o_z^{\LoPEC}$ is the ideal expectation value of the observable. 
We start with the definition:
\begin{align} 
	 \mathbb E_{\bm \sigma}\mathbb E_{z}[\hat o_z^{\LoPEC}]
	 &= \sum_{\bm \sigma}\sum_{z}\prod_{(i,j)\in \mu} \gamma_{i,j} \sigma_{i,j} o_z p(\sigma_{i,j}) p(z)  \\
	 &= \sum_{\bm \sigma}\sum_{z}\prod_{(i,j)\in \mu} \gamma_{i,j} \sigma_{i,j} o_z p(\sigma_{i,j}) \Tr(\prod_{i' = 1}^d  \NU_{i'} \prod_{j' = 1}^{J}\mathcal P_{i',j'}^{\sigma_{i',j'}}(\rho)\ketbra{z} )  \\
	 &= \sum_{\bm \sigma}\prod_{(i,j)\in \mu} \gamma_{i,j} \sigma_{i,j}  p(\sigma_{i,j}) \Tr(\prod_{i' = 1}^d  \NU_{i'} \prod_{j' = 1}^{J}\mathcal P_{i',j'}^{\sigma_{i',j'}}(\rho)O )\\
	 &= \sum_{\bm \sigma}\prod_{(i,j)\in \mu} \gamma_{i,j} \sigma_{i,j}  p(\sigma_{i,j}) \Tr(\rho \prod_{i' = d}^1  \prod_{j' = 1}^{J}\mathcal P_{i',j'}^{\sigma_{i',j'}}\NU_{i'}^\dag (O) ) , 
\end{align}
where $p(\sigma_{i,j}) = 1-\p_{i,j}$ if $\sigma_{i,j} = 1$ and $p(\sigma_{i,j}) = \p_{i,j}$ otherwise.
By construction, we have
\begin{align} 
	&\sum_{\sigma_{i,j} = \pm 1}\gamma_{i,j} \sigma_{i,j} p(\sigma_{i,j}) \mathcal P_{i,j}^{\sigma_{i,j}} = \NC_{i,j}^{-1},\\
	&\sum_{\sigma_{i,j} = \pm 1} p(\sigma_{i,j}) \mathcal P_{i,j}^{\sigma_{i,j}} = \NC_{i,j}.
\end{align}
Therefore
\begin{align} 
	 \mathbb E_{\bm \sigma}\mathbb E_{z}[\hat o_z^{\LoPEC}]
	 &= \Tr(\rho \prod_{i = d}^1  \left(\prod_{j:(i,j)\in \mu} \NC_{i,j}^{-1}\right)
	 \left(\prod_{j:(i,k)\not\in \mu} \NC_{i,j}\right)
	 \prod_{j = 1}^{J}\NC_{i,j}\UC_{i}^\dag (O) ).  
\end{align}
For the pairs $(i,j)$ inside the light cone, $\NC_{i,j}^{-1}$ cancels $\NC_{i,j}$:
\begin{align} 
	 \mathbb E_{\bm \sigma}\mathbb E_{z}[\hat o_z^{\LoPEC}]
	 &= \Tr(\rho \prod_{i = d}^1  
	 \left(\prod_{j:(i,j)\not\in V} \NC_{i,j}^2\right)
	 \UC_{i}^\dag (O) ).  
\end{align}
Since we are left with only noise channels outside the light cone, we can replace them with the identity channel to obtain 
\begin{align} 
	 \mathbb E_{\bm \sigma}\mathbb E_{z}[\hat o_z^{\LoPEC}]
	 &= \Tr(\rho \prod_{i = d}^1  
	 \UC_{i}^\dag (O) ) 
	 = \avg{\UC^\dag(O)}.
\end{align}
This completes the proof of \cref{eq:LoPEC-convergent}.

\section{Computing the bounds on the ZNE remainder}\label{sec:zne-bound}
In this section, we discuss some practical details on classically computing the bounds in \cref{eq:upper-bound-1} and, similarly, \cref{eq:local-ZNE-bound}.
For large values of $dJ$, looping over $k = K+1,\dots,dJ$ and computing $\binom{dJ}{k}$ values of $\p_{\nu_k}$ appear to be inefficient.
Instead, we compute the bound using the following lemma:
\begin{lemma}\label{lem:sign-of-sum}
For all $k \geq K+1$,
\begin{align} 
		\text{sign}\left(\sum_{\ell = 0}^{K} \beta_\ell g_\ell^k\right) = (-1)^{K}. 
	\end{align}	
\end{lemma}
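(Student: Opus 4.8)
The plan is to recognize the signed quantity $S_k \equiv \sum_{\ell=0}^K \beta_\ell g_\ell^k$ (so that the paper's $\Gamma_k = \abs{S_k}$) as a polynomial-interpolation remainder evaluated at the origin, and then read off its sign from the positivity of the nodes $g_0=1<g_1<\dots<g_K$. First I would observe that the defining constraints $\sum_\ell \beta_\ell g_\ell^j = \delta_{j,0}$ for $j=0,\dots,K$ say precisely that the $\beta_\ell$ are the Lagrange extrapolation-to-zero weights, $\beta_\ell = L_\ell(0)$ with $L_\ell(x) = \prod_{m\neq\ell}(x-g_m)/(g_\ell-g_m)$; uniqueness follows from invertibility of the Vandermonde system on distinct nodes. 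Consequently, for every $k$, $S_k = \sum_\ell g_\ell^k L_\ell(0) = p_k(0)$, where $p_k$ is the unique polynomial of degree $\le K$ interpolating $x\mapsto x^k$ at the $K+1$ nodes $g_0,\dots,g_K$.

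Next I would factor the interpolation error. Since $x^k - p_k(x)$ is a degree-$k$ polynomial vanishing at every node, it is divisible by $\omega(x)\equiv\prod_{m=0}^K (x-g_m)$, so $x^k - p_k(x) = \omega(x)\,q_k(x)$ for a monic quotient $q_k$ of degree $k-K-1\ge 0$; here $k\ge K+1$ is exactly the hypothesis that guarantees the quotient is a genuine polynomial. Evaluating at $x=0$ and using $\omega(0) = (-1)^{K+1}\prod_{m} g_m$ gives $-S_k = (-1)^{K+1}\big(\prod_m g_m\big)\,q_k(0)$, that is,
\begin{align}
S_k = (-1)^K \Big(\prod_{m=0}^K g_m\Big)\, q_k(0).
\end{align}
Because every $g_m>0$ the product is strictly positive, so $\operatorname{sign}(S_k)=(-1)^K\operatorname{sign}(q_k(0))$, and it remains only to show $q_k(0)>0$.

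The crucial positivity step is to identify $q_k(0)$. Expanding about infinity, $1/\omega(x) = x^{-(K+1)}\prod_m(1-g_m/x)^{-1} = \sum_{d\ge 0} h_d(g_0,\dots,g_K)\,x^{-(K+1+d)}$, where $h_d$ is the complete homogeneous symmetric polynomial of degree $d$; since $x^k/\omega(x) = q_k(x) + p_k(x)/\omega(x)$ with $p_k/\omega$ contributing only negative powers of $x$, matching the constant term yields $q_k(0) = h_{k-K-1}(g_0,\dots,g_K)$. As every $g_m>0$ and $h_d$ is a sum of monomials with unit coefficients, $q_k(0)>0$, which closes the argument and gives $\operatorname{sign}(S_k)=(-1)^K$ as claimed.

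I expect the identification and positivity of $q_k(0)$ to be the main obstacle: the reduction to a remainder is bookkeeping, but one must establish the sign without enumerating the $\binom{dJ}{k}$ terms of the sum explicitly. The symmetric-function route above makes positivity transparent and uses only $g_\ell>0$. An alternative that delivers the same conclusion is the classical interpolation remainder $x^k-p_k(x)=\tfrac{1}{(K+1)!}(x^k)^{(K+1)}(\xi)\,\omega(x)$ for some $\xi\in(0,g_K)$, since $(x^k)^{(K+1)}(\xi)>0$ whenever $\xi>0$ and $k\ge K+1$; this trades the purely algebraic positivity for a mean-value argument but requires the same observation that all nodes, and hence $\xi$, are strictly positive.
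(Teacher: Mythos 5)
Your proof is correct, and it takes a genuinely different route from the paper's. The paper's argument is analytic: it views $h(x)=\sum_{\ell=0}^K\beta_\ell g_\ell^x$ as an exponential polynomial in a real variable, invokes the fact that a nontrivial sum of $K+1$ exponentials with distinct positive bases has at most $K$ real zeros, notes that the defining constraints force these zeros to sit exactly at $x=1,\dots,K$, and then tracks sign changes starting from $h(0)=1$ to conclude $\text{sign}(h(k))=(-1)^K$ for all $k>K$. Your argument is purely algebraic: you identify the $\beta_\ell$ as Lagrange extrapolation-to-zero weights, write $S_k=p_k(0)$ for the degree-$\leq K$ interpolant $p_k$ of $x^k$ at the nodes, factor $x^k-p_k(x)=\omega(x)q_k(x)$, and compute $q_k(0)=h_{k-K-1}(g_0,\dots,g_K)>0$ via the generating function of the complete homogeneous symmetric polynomials; each step checks out, including the degree count that uses the hypothesis $k\geq K+1$ and the matching of constant terms in the expansion at infinity. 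What your route buys: it avoids the two analytic facts the paper leans on (the zero count for exponential sums, and the simplicity of those zeros, which the paper's ``changes its sign whenever $x$ increases past a zero'' step implicitly requires but does not justify), and it proves strictly more than the sign, namely the closed form $S_k=(-1)^K\bigl(\prod_{m=0}^K g_m\bigr)h_{k-K-1}(g_0,\dots,g_K)$, which gives an exact expression for $\Gamma_k=\abs{S_k}$ in \cref{eq:upper-bound-2} rather than only its sign. What the paper's route buys is brevity: three lines and no interpolation machinery. One incidental point your derivation surfaces: the solution of the constraints is $\beta_\ell=L_\ell(0)=\prod_{m\neq\ell}\left[g_m/(g_m-g_\ell)\right]$, which differs by a factor $(-1)^K$ from the closed form $\prod_{m\neq\ell}\left[g_m/(g_\ell-g_m)\right]$ printed in the main text; both proofs work from the constraints rather than from that formula, so the lemma is unaffected, but the printed formula appears to carry a sign typo for odd $K$.
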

\begin{proof}
Let $h(x) \equiv \sum_{\ell = 0}^{K} \beta_\ell g_\ell^x$ be a continuous function of $x\in \mathbb R$. 
The function $h(x)$ is an exponential polynomial that has at most $K$ zeros.
Recall that $h(1) = h(2) = \dots = h(K) = 0$, so its $K$ zeros are exactly at $x = 1,\dots,K$.
Therefore, $h(x)$ never changes the sign for all $x > K$.
Since $h(0) = 1$ is positive and $h(x)$ changes its sign whenever $x$ increases past a zero, it follows that $\text{sign}(h(x)) = (-1)^{K}$ for all $x > K$.
\end{proof}

Using \cref{lem:sign-of-sum}, we can remove the absolute value in \cref{eq:upper-bound-1} to obtain
\begin{align} 
	\abs{\bar R_{K+1}}
	&\leq (-1)^{K+1}\sum_{k = K+1}^{dJ} 2^k \sum_{\ell = 0}^K \beta_\ell g_\ell^k \sum_{\nu_k} \p_{\nu_k} \nonumber\\
	&= (-1)^{K+1}\sum_{\ell = 0}^{K}\beta_\ell \left[\prod_{i = 1}^{d}\prod_{j = 1}^{J}(1 + 2g_\ell \p_{i,j})- \sum_{k = 0}^{K} 2^k  g_\ell^k \sum_{\nu_k} \p_{\nu_k} \right].\label{eq:rewrite-sum}
\end{align}
In the last line, we add and subtract the sum over $k = 0,\dots, K$ and use the binomial expansion to rewrite the sum over $k = 0,\dots,dJ$.
For a small $K$, both terms in \cref{eq:rewrite-sum} can be computed efficiently.

\section{Specifications for the numerical examples}\label{sec:numerics-specifications}
In this section, we provide detailed specifications for the numerics presented in the main text.
In all of our examples, we assume a system of qubits with the same connectivities as that of IBM's Ithaca device, which hosts 65 qubits in a heavy hex lattice~\cref{fig:ithaca}(a).
In examples where the number of qubits $n$ are fewer than 65, we use the first $n$ qubits according to the labels in \cref{fig:ithaca}(a).

In our examples, we benchmark the error mitigation strategies on circuits composing of layers of CNOTs and random single-qubit Clifford gates~\cref{fig:ithaca}(b).
The CNOT layers alternate between the three presets described in \cref{fig:ithaca}(b).
Every three CNOT layers is followed by a layer consisting of random single-qubit gates uniformly and independently chosen from $\{H,S\}$, where
\begin{align} 
	H = \frac{1}{\sqrt{2}}\begin{pmatrix}
		1 & 1\\
		1 & -1
	\end{pmatrix},
	\quad 
	S = \begin{pmatrix}
		1 & 0\\
		0 & i
	\end{pmatrix}.
\end{align}

Additionally, each CNOT layer is preceded by $J = 10n$ randomly generated noise channels
\begin{align} 
	 \NC_{i,j}(\rho) = e^{\mathcal L_{i,j}} \rho = (1-\p_{i,j})\rho + \p_{i,j} P_{i,j} \rho P_{i,j}^\dag.
\end{align}
For a given CNOT layer $i$, we generate a random sparse matrix $C_i \in \{0,1,2,3\}^{J\times n}$ so that its $j$th row specifies a Pauli string $P_{i,j}$, with the convention that $0\rightarrow \mathbb I, 1\rightarrow X, 2\rightarrow Y,$ and $3\rightarrow Z$.
We choose the density of $C_i$---the fraction of nonzero entries in $C_i$---so that the average weight of $P_{i,j}$ is two. 
We then choose each $\p_{i,j}$ uniformly random between $0$ and $8\times 10^{-4}$ so that the average error rate per CNOT gate is $4\times 10^{-3}$ (there are, on average, 10 error channels per CNOT gate).

\begin{figure}
\includegraphics[width=0.95\textwidth]{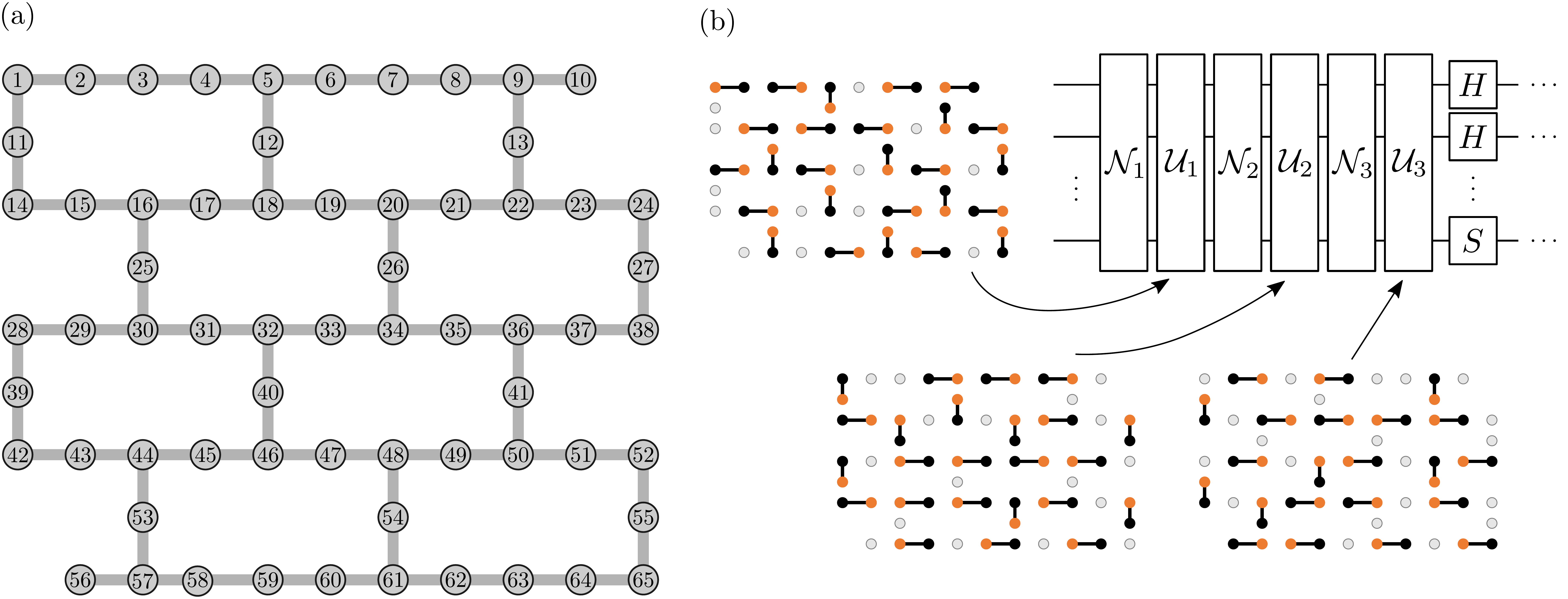}
\caption{An illustration of the systems and the circuits considered in our examples. (a) The connectivities (gray lines) between 65 qubits (gray circles) of IBM's Ithaca device. In our examples, systems of $n< 65$ qubits will consist of the first $n$ qubits ordered by the labels indicated above. 
(b) The circuits in our examples consist of CNOT layers and single-qubit gates. 
The CNOT layers alternate between three depicted presets, where a black line indicates a CNOT gate with the black qubit being the control and the orange qubit being the target. 
The single-qubit gates are chosen randomly between $H$ and $S$.}
\label{fig:ithaca}
\end{figure}

\end{document}